\documentclass[conference,letterpaper]{IEEEtran}
\IEEEoverridecommandlockouts

\usepackage{graphicx}
\usepackage{amsmath,amsthm,amsfonts,amssymb}
\usepackage{algorithmic}    
\usepackage{cite}
\usepackage{bm}
\usepackage{bbm}
\usepackage{url}
\usepackage{array}
\usepackage{color,soul}
\usepackage{multirow}
\usepackage{booktabs}
\usepackage[table,xcdraw]{xcolor}
\usepackage{enumitem}
\usepackage[caption=false,font=footnotesize]{subfig}

\theoremstyle{plain}
\newtheorem{thm}{Theorem}

\newtheorem{prop}[thm]{Proposition}

\newtheorem{remark}{Remark}

\newcommand{\supp}{\operatorname{supp}}

\newcommand{\Top}{\operatorname{Top}}
\newcommand{\argmin}{\operatornamewithlimits{arg\,min}}

\usepackage[english]{babel}

\usepackage{algorithm}

\usepackage{comment}         
\allowdisplaybreaks[4]
\usepackage{cases}   
\begin{document}
\title{Delay-Doppler Domain Channel Estimation: \\ What if Sparsity is Unknown?}
\author{Zijian Yang, Yulin Shao, Fen Hou, Shaodan Ma
\thanks{Z. Yang, F. Hou, and S. Ma are with the State Key Laboratory of Internet of Things for Smart City, University of Macau, Macau, China (e-mails: \{yc37408,shaodanma,fenhou\}@um.edu.mo).}
\thanks{Y. Shao is with the Department of Electrical and Computer Engineering, The University of Hong Kong, Hong Kong, China (e-mail: ylshao@hku.hk).}
}
\maketitle

\begin{abstract}
Sparsity in the delay-Doppler (DD) domain enables efficient channel estimation, but the realization-wise sparsity level is rarely known in advance, and it fluctuates. What if we could estimate the channel without ever knowing how many delays or Dopplers are active? This paper answers that question. We propose a sparsity-agnostic structured estimator that requires no prior knowledge of delay or Doppler sparsity budgets. The key idea is to exploit the Cartesian-product structure of DD support (active delays share a common Doppler set) and to select the support dimensions directly from the data via the Bayesian information criterion. We instantiate the framework on an affine frequency division multiplexing system, where the observation model naturally admits an on-grid DD representation.  Numerical results demonstrate that it recovers the exact support with high probability and achieves near-oracle channel reconstruction accuracy, consistently outperforming fixed-budget baselines and sparse Bayesian learning. The approach is waveform-agnostic and offers a practical, adaptive solution for DD-domain channel estimation under unknown and time-varying sparsity.
\end{abstract}

\begin{IEEEkeywords}
Delay-Doppler domain, doubly dispersive channel estimation, sparsity, Bayesian information criterion, AFDM.
\end{IEEEkeywords}

\section{Introduction}
\label{sec:introduction}

Sparsity is a fundamental characteristic of wireless channels, especially in high-mobility and doubly dispersive propagation environments where only a limited number of dominant propagation components carry most of the channel energy \cite{bajwa2010compressed,molisch2004geometry,zhang2025polarization,cao2026stepped}. When such channels are represented in the delay-Doppler (DD) domain, their energy often concentrates on a small subset of DD coefficients, making sparse DD-domain modeling a natural foundation for pilot-efficient channel estimation \cite{taubock2010compressive}. Under an on-grid approximation, the estimation problem reduces to a sparse linear inverse problem, enabling the application of compressed sensing and sparse recovery techniques \cite{berger2010application,eldar2012compressed}.

Despite the maturity of sparse recovery theory, a critical practical issue remains: the realization-wise sparsity level of the DD-domain channel is generally unknown and can vary significantly from one channel realization to another. This is not merely a technical nuance: in realistic doubly dispersive channels, the number of active delay taps and the number of active Doppler components fluctuate over time and across different propagation environments. As a consequence, many existing estimators still rely, either explicitly or implicitly, on externally prescribed sparsity budgets, manually tuned thresholds, or fixed model orders \cite{tropp2007signal,benzine2026models,shao2026embodied}. Unfortunately, such quantities are rarely known a priori in practice, and mismatches between the assumed and true sparsity levels lead to either support underestimation (missing real paths) or overestimation (introducing spurious coefficients).

A parallel difficulty is that generic sparse recovery algorithms treat the DD-domain channel as an unstructured sparse vector, overlooking the inherent regularity of the DD support pattern \cite{eldar2012compressed}. In practice, however, the support is not arbitrary: active delay taps often share a common set of Doppler frequencies, leading to a Cartesian-product structure \cite{duarte2011structured,3gpp}. Such regularity, when properly exploited, can significantly reduce the ambiguity of support search and improve model selection. Therefore, an important open question is: \textit{Can we design a DD-domain channel estimator that (i) requires no prior knowledge of sparsity levels, (ii) exploits the structured nature of DD support, and (iii) adapts to each channel realization in a principled manner?}

In this paper, we answer this question affirmatively by developing a sparsity-agnostic structured channel estimator for the DD domain. Instead of assuming known delay and Doppler sparsity levels, we infer the effective support dimensions directly from the received observations. Rather than applying a generic sparse recovery routine over an unrestricted support space, we exploit the Cartesian-product regularity of the DD support and construct a structure-aware search strategy. The final model is selected via the Bayesian information criterion (BIC) \cite{schwarz1978estimating}, which balances data fitting accuracy and model complexity without requiring any external sparsity budget.

The proposed method is deliberately decoupled from any specific waveform. It requires only a linear observation model relating the received observations to the vectorized DD-domain channel via a known sensing matrix, under additive noise. Such a model arises naturally in various multicarrier systems, including but not limited to AFDM, OTFS, and OFDM with appropriate preprocessing. In this paper, we instantiate the framework using AFDM \cite{bemani2023affine,cao2025agile}, because AFDM provides a particularly clean DD-domain input-output relation, but the core methodology is transferable to any system that admits an on-grid DD observation model.

Our main contributions are summarized as follows.
\begin{itemize}[leftmargin=0.5cm]
    \item We propose a sparsity-agnostic DD-domain channel estimator that does not require the delay and Doppler sparsity levels to be known a priori. This makes the method practical for realistic channels where the realization-wise sparsity fluctuates from block to block.
    \item By exploiting the Cartesian-product regularity of DD-domain support, we develop a structure-aware support search strategy that significantly reduces the search space compared to unstructured sparse recovery, while remaining robust to unknown model order.
    \item We establish a BIC-based model selection mechanism that automatically determines the support dimensions and the final support candidate from the data. Numerical results demonstrate that the proposed method achieves high exact support recovery probability and near-oracle channel reconstruction accuracy, outperforming fixed-budget baselines and sparse Bayesian learning.
\end{itemize}

\section{System Model}
\label{sec:system_model}

\subsection{Doubly Selective Channel Estimation}

We first describe the discrete-time doubly selective channel model and its on-grid DD representation.
Let $s_n$ and $r_n$ denote the transmitted and received discrete-time signals, respectively. The doubly selective channel is modeled as
\begin{equation}
r_n = \sum_{l=0}^{L-1} s_{n-l} h_{l,n} + z_n,\qquad n\in\mathbb{Z},
\label{eq:sys_rn}
\end{equation}
where $L-1$ is the maximum channel delay, $h_{l,n}$ is the time-varying complex gain associated with the $\ell$-th delay tap at time index $n$ \cite{shao2021federated}, and $z_n \sim \mathcal{CN}(0,\sigma_w^2)$ is additive white Gaussian noise (AWGN). 
The goal of channel estimation is to recover $\{h_{l,n}\}$ from the received signal.

To make the problem tractable, we adopt an on-grid DD representation. Each delay tap is expanded over a finite Doppler grid:
\begin{equation}
h_{l,n}
=
\sum_{q=-Q}^{Q}
\alpha_{l,q} e^{j\frac{2\pi}{N}nq},
\quad l=0,\ldots,L-1,
\end{equation}
where $q$ is the Doppler-bin index, $Q$ is the maximum normalized Doppler index, and $\alpha_{l,q}$ denotes the effective DD coefficient at location $(l,q)$. 
The coefficient is further modeled as $\alpha_{l,q}=I_{l,q}g_{l,q}$, with $I_{l,q}\in\{0,1\}$ indicating activity and $g_{l,q}\sim\mathcal{CN}(0,\sigma_\alpha^2)$ the complex gain.

For convenience, we define $B\triangleq 2Q+1$ such that each delay tap corresponds to a Doppler block of size $B$. The $\ell$-th block coefficient vector is defined as
$\bm{\alpha}_l\triangleq[\alpha_{l,-Q},\alpha_{l,-Q+1},\ldots,
\alpha_{l,Q}]^\top\in\mathbb C^B$. Stacking all delay blocks yields
$\bm{\alpha}\triangleq[\bm{\alpha}_0^\top,\ldots,
\bm{\alpha}_{L-1}^\top]^\top\in\mathbb C^{LB}$.
The average channel power is normalized as $\sum_{l=0}^{L-1}\sum_{q=-Q}^{Q}
\mathbb E[|\alpha_{l,q}|^2]=1$.

In this on-grid DD domain, the channel is often sparse: only a small number of DD coefficients are active. 
Classical compressed sensing (CS) based channel estimators treat $\bm{\alpha}$ as an unstructured sparse vector. They typically require either
\begin{itemize}[leftmargin=0.5cm]
    \item a priori knowledge of the sparsity level (e.g., the number of active paths), or
    \item a fixed sparsity budget or threshold that remains constant across channel realizations.
\end{itemize} 

However, in realistic doubly dispersive channels, the number of active delay taps and active Doppler components randomly varies from one realization to another. Consequently, a fixed sparsity budget inevitably leads to underestimation or overestimation. Moreover, conventional methods ignore the inherent structure of the DD support, which can be exploited to improve estimation reliability.

\subsection{Structured Sparsity in the DD Domain}\label{sec:IIB}
A general doubly sparse DD support can be factorized as $I_{l,q}=I_l I_q^{(l)}$, where $I_l$ indicates whether the $\ell$-th delay tap is active, and $I_q^{(l)}$ indicates which Doppler bins are active within that delay tap.
In many physically relevant scenarios, for instance, when the relative velocity between transmitter and receiver dominates the Doppler effect and the scattering environment exhibits a similar Doppler spread across different delays, the active Doppler bins are common to all active delay taps \cite{3gpp}.

Concretely, there exists a common indicator $S_q$ such that $I_q^{(l)}=S_q$ for all $l\in\{0,\ldots,L-1\}$, and therefore $I_{l,q}=I_l S_q$. 
The support variables are modeled as independent Bernoulli random variables: $I_l\sim\mathrm{Bernoulli}(p_d)$ and $S_q\sim\mathrm{Bernoulli}(p_D)$, with mutual independence across $\ell$ and across $q$. Moreover, the active complex gains are assumed independent of the support variables and satisfy $g_{l,q}\sim\mathcal{CN}(0,\sigma_\alpha^2)$.

Define the active delay set $\mathcal{D}^{\star}\triangleq\{\,l\in\{0,...,L-1\}: I_l=1\,\}$ and the active Doppler set $\mathcal{Q}^{\star}\triangleq\{\,q\in\{-Q,...,Q\}: S_q=1\,\}$. Then the true DD support has the Cartesian product form
\begin{equation}
\mathcal{S}^{\star}=\mathcal{D}^{\star}\times\mathcal{Q}^{\star}.
\label{eq:S_star}
\end{equation}
The cardinalities are $K_d \triangleq |\mathcal{D}^{\star}| = \sum_{l=0}^{L-1} I_l$ and $K_D \triangleq |\mathcal{Q}^{\star}| = \sum_{q=-Q}^{Q} S_q$.
Their means are $s_d \triangleq \mathbb{E}[K_d]=p_dL$ and $s_D \triangleq \mathbb{E}[K_D]=p_DB$, but the actual $K_d$ and $K_D$ vary per realization.

\begin{remark}[Unknown sparsity]
The estimator does not know $p_d$, $p_D$ nor the realization-specific $K_d$, $K_D$.
It only knows that the support obeys the product structure \eqref{eq:S_star}. The goal is to adaptively infer both the support dimensions and the coefficients from the received data.
\end{remark}

\subsection{AFDM Observation Model}
We now describe how the DD coefficient vector $\bm{\alpha}$ relates to the received AFDM pilot signals. Let $\bm{x}\!\in\!\mathbb{C}^{N}$ be the transmitted AFDM pilot. For a DD element at $(l,q)$, the corresponding received AFDM-domain sample at index $k$ is
\begin{equation}
\phi_{k,l,q}
=
\exp\!\left(
j2\pi
\left(
c_1 l^2 - \frac{ml}{N} + c_2(m^2-k^2)
\right)
\right)x_m,
\label{eq:atom_entry}
\end{equation}
where $m=(k-q+2Nc_1l)_N$ and $(\cdot)_N$ denotes the modulo-$N$ operation; $c_1$, $c_2$ are AFDM chirp parameters \cite{bemani2023affine}.

Let $\mathcal{P}\subseteq\{0,\ldots,N-1\}$ be the set of pilot observation indices (including guard samples), and let $M\triangleq |\mathcal{P}|$. Collecting the received DAFT-domain samples over $\mathcal{P}$ yields the observation vector $\bm{y}=[\,y_k\,]_{k\in\mathcal{P}}\in\mathbb{C}^{M}$. For each pair $(l,q)$, define the sensing atom $\bm{a}_{l,q}=[\,\phi_{k,l,q}\,]_{k\in\mathcal{P}}\in\mathbb{C}^{M}$. The AFDM sensing matrix is then constructed as
$\mathbf{M}_p
= [\bm{a}_{0,-Q}, \bm{a}_{0,-Q+1}, \cdots, \bm{a}_{0,Q},
\bm{a}_{1,-Q}, \cdots, \bm{a}_{L-1,Q}]
\in\mathbb{C}^{M\times LB}$.

Accordingly, the AFDM pilot observation model can be written compactly as
\begin{equation}
\bm{y} = \mathbf{M}_p \bm{\alpha} + \bm{w},
\label{eq:y_model}
\end{equation}
where $\bm{w}\sim\mathcal{CN}(\bm{0},\sigma_w^2\mathbf{I}_M)$. The mapping from a DD pair $(l,q)$ to the index in $\bm{\alpha}$ is given by $\iota(l,q) \triangleq lB + (q+Q+1), l\in\{0,\ldots,L-1\}, q\in\{-Q,\ldots,Q\}$. In other words, the coefficient associated with the DD-grid point $(l,q)$ is stored at the $\iota(l,q)$-th entry of $\bm{\alpha}$.

Overall, the channel estimation problem reduces to: Given $\bm{y}$ and $\mathbf{M}_p$, recover the sparse vector $\bm{\alpha}$ whose support obeys the Cartesian product structure \eqref{eq:S_star}, without knowing the sparsity levels $K_d$ and $K_D$ in advance.

\section{A Sparsity-Agnostic Channel Estimator}
\label{sec:algorithm}

In this section, we propose a new channel estimation method that does not require the knowledge of how many delay taps or Doppler bins are active. Instead, it directly estimates the channel from the received pilot signal by exploiting the fact that all active delays share the same Doppler support. The method works in two conceptual steps: (i) it generates candidate supports for different possible sparsity levels, and (ii) it selects the best candidate using BIC. The final channel estimate is obtained by a least-squares fit on the selected support.

\subsection{Problem Reformulation as Structured Model Selection}
Recall from \eqref{eq:S_star} that the true DD support has a Cartesian product structure.

For any candidate support $\mathcal{S}\subseteq \{0,\ldots,L-1\}\times\{-Q,\ldots,Q\}$, we denote by $\iota(\mathcal{S})=\{\iota(l,q):(l,q)\in\mathcal{S}\}$ the corresponding set of column indices. We now define a family of structured supports parameterized by the candidate cardinalities $d$ and $r$:
$\mathfrak{F}(d,r)
=
\Bigl\{
\mathcal{S}=\mathcal{D}\times\mathcal{Q}:
\mathcal{D}\subseteq\{0,\ldots,L-1\},\ |\mathcal{D}|=d,\ 
\mathcal{Q}\subseteq\{-Q,\ldots,Q\},\ |\mathcal{Q}|=r
\Bigr\}$.
The true support belongs to $\mathfrak{F}(K_d,K_D)$. The size of this structured family is
\begin{equation}
|\mathfrak{F}(d,r)|=\binom{L}{d}\binom{B}{r},
\label{eq:family_size_alg_new}
\end{equation}
which is substantially smaller than the number of arbitrary unstructured supports of size $dr$, i.e., $\binom{LB}{dr}$. The shared-Doppler prior thus provides a significant combinatorial reduction.

For any candidate support $\mathcal{S}\in\mathfrak{F}(d,r)$, let $\mathbf{M}_{\mathcal{S}}$ denote the submatrix of $\mathbf{M}_p$ obtained by selecting the columns indexed by $\iota(\mathcal{S})$. The least-squares estimate restricted to $\mathcal{S}$ is
\begin{equation}
\widehat{\bm{\alpha}}_{\mathcal{S}}
=
\argmin_{\supp(\bm{\alpha})\subseteq \iota(\mathcal{S})}
\|\bm{y}-\mathbf{M}_p\bm{\alpha}\|_2^2.
\label{eq:ls_support_restricted_alg_new}
\end{equation}
Let the corresponding residual sum of squares be
\begin{equation}
\mathrm{RSS}(\mathcal{S})
=
\|\bm{y}-\mathbf{M}_{\mathcal{S}}\widehat{\bm{\alpha}}_{\mathcal{S}}\|_2^2.
\label{eq:rss_alg_new}
\end{equation}
Under the Gaussian observation model, profiling out the noise variance yields the likelihood term
$M\log\!\left(\frac{\mathrm{RSS}(\mathcal{S})}{M}\right)$,
where $M=|\mathcal{P}|$ is the number of observations. Since a support in $\mathfrak{F}(d,r)$ contains $dr$ complex coefficients, its real-valued model dimension is $2dr$. We therefore define the BIC-type score
\begin{equation}
\mathcal{J}(\mathcal{S};d,r)
=
M\log\!\left(\frac{\mathrm{RSS}(\mathcal{S})}{M}\right)
+
2dr\log M.
\label{eq:bic_alg_new}
\end{equation}

The ideal structured model-selection problem is then
\begin{equation}
(\widehat d,\widehat r,\widehat{\mathcal{S}})
=
\arg\min_{\substack{0\le d\le L,\;0\le r\le B\\
\mathcal{S}\in\mathfrak{F}(d,r)}}
\mathcal{J}(\mathcal{S};d,r).
\label{eq:ideal_model_selection_alg_new}
\end{equation}
with final estimate $\widehat{\bm{\alpha}}
=\widehat{\bm{\alpha}}_{\widehat{\mathcal S}}$.


Problem \eqref{eq:ideal_model_selection_alg_new} is well-motivated but computationally prohibitive if we were to enumerate all supports in $\mathfrak{F}(d,r)$ exactly. We next develop a structured candidate-generation rule that is consistent with the shared-Doppler geometry and avoids combinatorial search within each $(d,r)$ hypothesis.

\subsection{Candidate Support Generation}
To address the challenge, our key idea is to compute a rough initial estimate of the DD coefficients that is robust to noise, and then use it to identify the most energetic delay and Doppler indices.

We first compute the regularized least-squares proxy
\begin{equation}
\bm{z}
=
\left(
\mathbf{M}_p^H\mathbf{M}_p+\lambda_{\rm reg}\mathbf{I}_{LB}
\right)^{-1}\mathbf{M}_p^H\bm{y},
\qquad \lambda_{\rm reg}>0,
\label{eq:proxy_alg_new}
\end{equation}
which is equivalently the unique minimizer of
\begin{equation}
\bm{z}
=
\argmin_{\bm{u}\in\mathbb{C}^{LB}}
\|\bm{y}-\mathbf{M}_p\bm{u}\|_2^2
+
\lambda_{\rm reg}\|\bm{u}\|_2^2.
\label{eq:ridge_proxy_alg_new}
\end{equation}
Here, $\lambda$ is a small regularization constant (e.g., $10^{-10}$). This is a standard ridge (Tikhonov) estimate \cite{Hoerl}. It is not meant to be the final channel, but it provides a smooth energy distribution across the DD grid, reducing noise spikiness.

Reshaping $\bm{z}$ blockwise yields the matrix
\begin{equation}
\mathbf{Z}
= [
\bm{z}_0^{\top},
\bm{z}_1^{\top},
\cdots,
\bm{z}_{L-1}^{\top}
]^\top
\in\mathbb{C}^{L\times B},
\label{eq:Z_alg_new}
\end{equation}
where the $\ell$-th row corresponds to the $\ell$-th delay block and the $q+Q+1$-th column corresponds to Doppler index $q\in\{-Q,\ldots,Q\}$.

For a candidate number of active Doppler bins $r$, we compute the total energy per Doppler bin summed over all delays
\begin{equation}
C(q)
\triangleq
\sum_{l=0}^{L-1}|Z_{l,q+Q+1}|^2,
\quad q\in\{-Q,\ldots,Q\}.
\label{eq:global_doppler_score_alg_new}
\end{equation}
We then pick the $r$ Doppler bins with the largest $C(q)$ as the candidate Doppler support $\widehat{\mathcal{Q}}(r)$, where
\begin{equation}
\widehat{\mathcal{Q}}(r)
=
\Top_r\{C(q)\}_{q=-Q}^{Q},
\label{eq:q_hat_alg_new}
\end{equation}
where $\Top_r\{\cdot\}$ returns the indices of the $r$ largest elements. This exploits the shared-Doppler property: if a Doppler bin is active, it should contribute significantly across all delay taps.

Given $\widehat{\mathcal{Q}}(r)$, we compute for each delay $\ell$ the energy restricted to those Doppler bins:
\begin{equation}
D_{\widehat{\mathcal{Q}}(r)}(l)
\triangleq
\sum_{q\in\widehat{\mathcal{Q}}(r)}
|Z_{l,q+Q+1}|^2,
\quad l\in\{0,\ldots,L-1\}.
\label{eq:conditional_delay_score_alg_new}
\end{equation}

Then, we choose the $d$ delays with the largest $D(l)$, yielding
\begin{equation}
\widehat{\mathcal{D}}(d,r)
=
\Top_d\{D_{\widehat{\mathcal{Q}}(r)}(l)\}_{l=0}^{L-1}.
\label{eq:d_hat_alg_new}
\end{equation}
The resulting candidate support for the pair $(d,r)$ is
\begin{equation}
\widehat{\mathcal{S}}(d,r)
=
\widehat{\mathcal{D}}(d,r)\times \widehat{\mathcal{Q}}(r).
\label{eq:s_hat_dr_alg_new}
\end{equation}

This construction is computationally cheap (only sorting and aggregation) and guarantees that every candidate support satisfies the Cartesian product structure.

\subsection{BIC-Based Model Order Selection}
For each pair $(d,r)$, we evaluate the BIC-type score on the candidate support \eqref{eq:s_hat_dr_alg_new}:
\begin{equation}
\widehat{\mathcal{J}}(d,r)
=
M\log\!\left(
\frac{\mathrm{RSS}(\widehat{\mathcal{S}}(d,r))}{M}
\right)
+
2dr\log M.
\label{eq:bic_practical_alg_new}
\end{equation}
The final estimated support dimensions are obtained by the exhaustive grid search
\begin{equation}
(\widehat d,\widehat r)
=
\argmin_{0\le d\le L,\ 0\le r\le B}
\widehat{\mathcal{J}}(d,r),
\label{eq:dr_hat_alg_new}
\end{equation}
and the final support estimate is
\begin{equation}
\widehat{\mathcal{S}}
=
\widehat{\mathcal{S}}(\widehat d,\widehat r).
\label{eq:s_hat_final_alg_new}
\end{equation}
The final channel estimate is then defined by
\begin{equation}
\widehat{\bm{\alpha}}
=
\argmin_{\supp(\bm{\alpha})\subseteq \iota(\widehat{\mathcal{S}})}
\|\bm{y}-\mathbf{M}_p\bm{\alpha}\|_2^2.
\label{eq:final_ls_alg_new}
\end{equation}

\begin{remark}
The proposed algorithm is one-stage in the sense that support-dimension selection and channel estimation are integrated into a single structured model-selection procedure. In particular, the method does not first assume $(d,r)$ and then apply a separate sparse-recovery routine; instead, it jointly searches over the admissible dimension grid and evaluates each candidate by a common BIC-type criterion.
\end{remark}

\subsection{Discussion}

The complete procedure is summarized in Algorithm~\ref{alg:proposed_structured_search_new}. The complexity is dominated by the ridge inversion (which can be precomputed if the pilot pattern is fixed) and the least-squares fits for each $(d,r)$. Since the grid $(L+1)(B+1)$ is modest, the algorithm can run efficiently.

The performance of Algorithm~\ref{alg:proposed_structured_search_new} is formalized in Proposition \ref{prop:oracle_equiv_alg_new}, wherein we prove the relation between the our channel estimator and the oracle least-squares benchmark (which knows the true support).

\begin{prop}
\label{prop:oracle_equiv_alg_new}
If the selected support coincides with the true support, i.e., $\widehat{\mathcal{S}}=\mathcal{S}^{\star}$, the proposed estimator coincides with the oracle least-squares estimator. In particular,
\begin{equation}
\widehat{\bm{\alpha}}
=
\argmin_{\supp(\bm{\alpha})\subseteq \iota(\mathcal{S}^{\star})}
\|\bm{y}-\mathbf{M}_p\bm{\alpha}\|_2^2.
\label{eq:oracle_equiv_statement_alg_new}
\end{equation}
Moreover, in the noiseless case, if $\mathbf{M}_{\mathcal{S}^{\star}}$ has full column rank, then $\widehat{\bm{\alpha}}=\bm{\alpha}$.
\end{prop}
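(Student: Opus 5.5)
The first claim is definitional, so the plan is to unwind the final estimator \eqref{eq:final_ls_alg_new}. It is defined as the least-squares minimizer over vectors supported on $\iota(\widehat{\mathcal{S}})$. Under the hypothesis $\widehat{\mathcal{S}}=\mathcal{S}^{\star}$, the index sets coincide, $\iota(\widehat{\mathcal{S}})=\iota(\mathcal{S}^{\star})$. The feasible sets of the two optimization problems are therefore identical, and so are their objectives. Substituting gives \eqref{eq:oracle_equiv_statement_alg_new} directly.

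To make the "$\argmin$" meaningful as an equality of vectors, I will first reduce the restricted problem to an unrestricted one. Any $\bm{\alpha}$ with $\supp(\bm{\alpha})\subseteq\iota(\mathcal{S}^{\star})$ satisfies $\mathbf{M}_p\bm{\alpha}=\mathbf{M}_{\mathcal{S}^{\star}}\bm{\alpha}_{\iota(\mathcal{S}^{\star})}$. Hence the problem is equivalent to minimizing $\|\bm{y}-\mathbf{M}_{\mathcal{S}^{\star}}\bm{\beta}\|_2^2$ over $\bm{\beta}\in\mathbb{C}^{K_dK_D}$, followed by zero-padding. If the minimizer is not unique, I will note that both sides denote the same solution set, or the same chosen element such as the minimum-norm one; the equality then holds as an equality of sets or of that canonical choice. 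This bookkeeping about uniqueness is the only mildly delicate point in the first claim.

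For the noiseless claim, set $\bm{w}=\bm{0}$. Because $\supp(\bm{\alpha})\subseteq\iota(\mathcal{S}^{\star})$ by \eqref{eq:S_star}, we have $\bm{y}=\mathbf{M}_{\mathcal{S}^{\star}}\bm{\alpha}_{\iota(\mathcal{S}^{\star})}$. The true $\bm{\alpha}$ is therefore feasible and attains zero residual, so it is a minimizer. Full column rank of $\mathbf{M}_{\mathcal{S}^{\star}}$ makes $\mathbf{M}_{\mathcal{S}^{\star}}^H\mathbf{M}_{\mathcal{S}^{\star}}$ invertible. The normal equations then give the unique solution
\begin{equation*}
\widehat{\bm{\beta}}=(\mathbf{M}_{\mathcal{S}^{\star}}^H\mathbf{M}_{\mathcal{S}^{\star}})^{-1}\mathbf{M}_{\mathcal{S}^{\star}}^H\mathbf{M}_{\mathcal{S}^{\star}}\bm{\alpha}_{\iota(\mathcal{S}^{\star})}=\bm{\alpha}_{\iota(\mathcal{S}^{\star})}.
\end{equation*}
Zero-padding this solution yields $\widehat{\bm{\alpha}}=\bm{\alpha}$.

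I expect no real obstacle beyond stating the support-restriction equivalence cleanly. One edge case to mention is that some active gains $g_{l,q}$ may be zero. This does not matter, because only the inclusion $\supp(\bm{\alpha})\subseteq\iota(\mathcal{S}^{\star})$ is used.
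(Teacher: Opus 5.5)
Your proposal is correct and follows the same route as the paper's sketch. Under $\widehat{\mathcal{S}}=\mathcal{S}^{\star}$ the final least-squares problem \eqref{eq:final_ls_alg_new} is literally the oracle problem, and in the noiseless case full column rank of $\mathbf{M}_{\mathcal{S}^{\star}}$ makes the zero-residual solution unique and hence equal to $\bm{\alpha}$; your reduction to an unconstrained problem over $\bm{\beta}$ with zero-padding, the uniqueness bookkeeping, and the explicit normal-equation computation spell out steps the paper leaves implicit.
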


\begin{proof}
(sketch) If $\widehat{\mathcal{S}}=\mathcal{S}^{\star}$, then \eqref{eq:final_ls_alg_new} is precisely the least-squares problem posed on the true support. Hence the estimator coincides with the oracle least-squares solution, which proves \eqref{eq:oracle_equiv_statement_alg_new}. In the noiseless case, if $\mathbf{M}_{\mathcal{S}^{\star}}$ has full column rank, then the least-squares solution on the true support is unique and must equal the true coefficient vector, which yields $\widehat{\bm{\alpha}}=\bm{\alpha}$.
\end{proof}

\begin{algorithm}[t]
\caption{Sparsity-agnostic channel estimation}
\label{alg:proposed_structured_search_new}
\begin{algorithmic}[1]
\STATE \textbf{Input:} $\bm{y}$, $\mathbf{M}_p$, $\lambda_{\rm reg}$.
\STATE Compute the regularized proxy $\bm{z}$ via \eqref{eq:proxy_alg_new}.
\STATE Reshape $\bm{z}$ into $\mathbf{Z}\in\mathbb{C}^{L\times B}$ via \eqref{eq:Z_alg_new}.
\STATE Initialize $\mathcal{J}_{\min}\leftarrow +\infty$.
\FOR{$r=0$ \TO $B$}
    \STATE Compute the candidate shared Doppler support $\widehat{\mathcal{Q}}(r)$ using \eqref{eq:q_hat_alg_new}.
    \FOR{$d=0$ \TO $L$}
        \STATE Compute the candidate delay support $\widehat{\mathcal{D}}(d,r)$ using \eqref{eq:d_hat_alg_new}.
        \STATE Form the candidate support $\widehat{\mathcal{S}}(d,r)=\widehat{\mathcal{D}}(d,r)\times\widehat{\mathcal{Q}}(r)$.
        \STATE Evaluate the BIC-type score $\widehat{\mathcal{J}}(d,r)$ via \eqref{eq:bic_practical_alg_new}.
        \IF{$\widehat{\mathcal{J}}(d,r)<\mathcal{J}_{\min}$}
            \STATE $\mathcal{J}_{\min}\leftarrow \widehat{\mathcal{J}}(d,r)$.
            \STATE Store $(\widehat d,\widehat r)\leftarrow(d,r)$ and $\widehat{\mathcal{S}}\leftarrow \widehat{\mathcal{S}}(d,r)$.
        \ENDIF
    \ENDFOR
\ENDFOR
\STATE \textbf{Output:} final channel estimate $\widehat{\bm{\alpha}}$.
\end{algorithmic}
\end{algorithm}

Our channel estimator addresses the fundamental issue that the true sparsity levels $K_d$ and $K_D$ are unknown and vary per channel realization. Fixed-budget approaches (e.g., always assuming a constant $d$ and $r$) will fail when the actual support dimensions differ. Our method, by contrast, selects the support dimensions directly from the observed data.



\section{Simulation Results}
\label{sec:simulation}
This section evaluates the proposed sparsity-agnostic channel estimator under the AFDM-based DD observation model. The primary goal is to verify whether the proposed method can accurately recover the unknown structured support and achieve near-oracle channel estimation performance without assuming the support cardinalities a priori.

\subsection{Simulation Setup}

We consider an AFDM frame with $N=4096$, $L=30$, $Q=7$, $B=2Q+1=15$,
and adopt the canonical AFDM pilot configuration with $
N_p=8$, $P_{\mathrm{afdm}}=2$, unless otherwise stated. The DAFT parameters are chosen as
$c_1=-\frac{P_{\mathrm{afdm}}}{2N}=-\frac{1}{4096}$, $
c_2=\frac{1}{20N}$.
Under this setting, the observation size is $|\mathcal{P}|=584$, and the corresponding AFDM pilot overhead equals $685$.

The channel follows the Bernoulli model in Section~\ref{sec:system_model} with delay activity probability $p_d=0.2$ and common Doppler activity probability $p_D=0.2$.
Hence, the mean support sizes are $s_d^{\rm mean}=p_dL=6$, $s_D^{\rm mean}=p_DB=3$
The noise variance follows the full-frame SNR convention $\sigma^2=\frac{N_p}{N}\cdot 10^{-\mathrm{SNR}/10}$, and the ridge regularization parameter is fixed at $\lambda_{\rm reg}=10^{-10}$ for all SNR points.
All curves are averaged over $5000$ independent Monte Carlo realizations.

\begin{itemize}[leftmargin=0.5cm]
\item \textit{Proposed method}: the sparsity-agnostic estimator with BIC selection.
\item \textit{Shared-mean baseline}: fixes support dimensions to the statistical means $(d,r)=(6,3)$, then constructs the support via the same structured rule. This tests the effect of a typical fixed budget.
\item \textit{Shared-tolerant baseline}: enlarges the fixed dimensions to $(d,r)=(12,6)$ to examine whether simply increasing the budget suffices.
\item \textit{Sparse Bayesian learning (SBL)}: an adaptive sparsity-promoting baseline that does not require a prescribed sparsity order. It uses the same sensing matrix and pilot overhead.
\item \textit{Oracle-LS}: least-squares on the true support, serving as a lower performance bound.
\end{itemize}

\subsection{The Normalized Mean-Square Error (NMSE) Performance}

\begin{figure}[t]
    \centering
    \includegraphics[width=0.8\columnwidth]{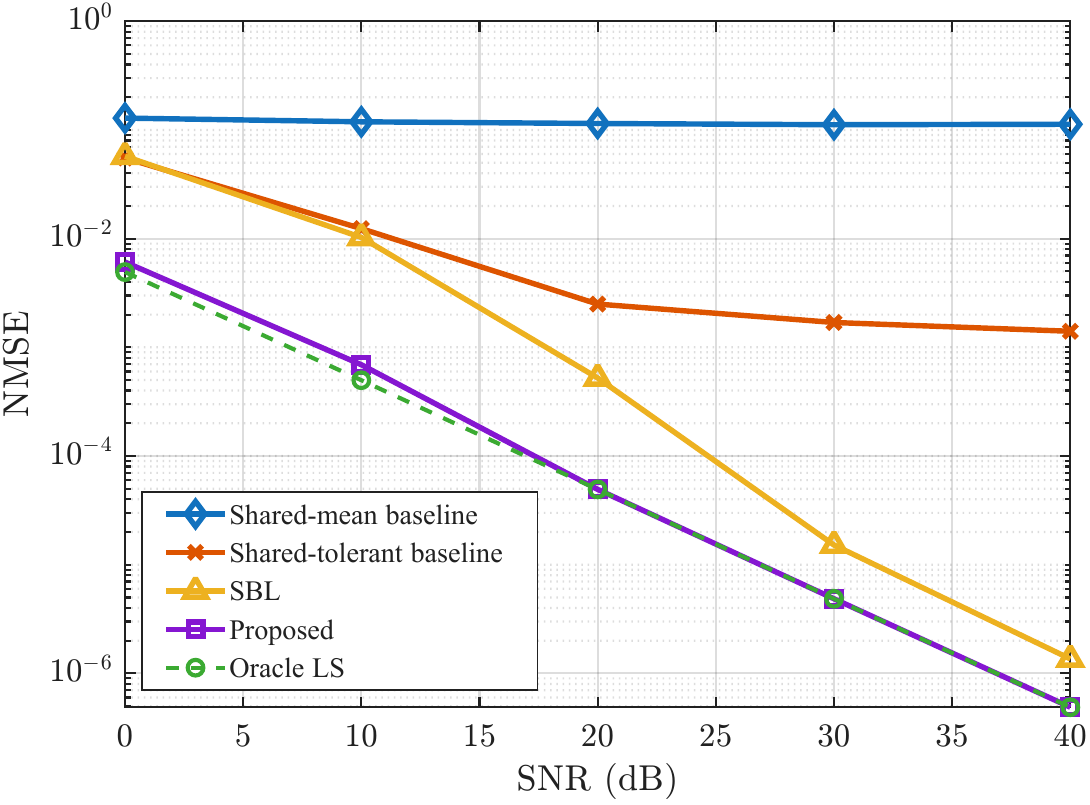}
    \caption{Average NMSE versus SNR for the considered channel estimation schemes.}
    \label{fig:nmse_main}
\end{figure}

Figure~\ref{fig:nmse_main} shows the average NMSE as a function of SNR for the considered channel estimation schemes. Unlike the fixed shared-count baselines, SBL does not require a prescribed sparsity order and instead promotes sparsity through automatic relevance determination. 

Several important observations can be made from Figure~\ref{fig:nmse_main}. 
\begin{enumerate}[leftmargin=0.5cm]
    \item First, the proposed method closely follows Oracle-LS over the entire SNR range. This indicates that the proposed structured support search can identify the correct DD support with high probability, so that the final least-squares reconstruction becomes essentially oracle-equivalent. In particular, the proposed method reaches the $10^{-4}$ NMSE regime at moderate SNR, confirming that the unknown support dimensions can be inferred directly from the received observations without sacrificing estimation accuracy.
    \item Second, the shared-mean baseline performs poorly across the whole SNR range and exhibits a pronounced error floor. This behavior shows that the dominant error source is not the noise level itself, but the mismatch caused by fixing the support dimensions to their statistical means. Since the true Bernoulli realization fluctuates around these means, a fixed shared-count assumption frequently underestimates the actual support and therefore removes informative coefficients.
    \item Third, the shared-tolerant baseline improves upon the shared-mean baseline, which confirms that enlarging the fixed support budget can partially reduce the risk of support underestimation. However, its performance remains significantly worse than that of the proposed method and still exhibits a clear high-SNR floor. This implies that merely increasing the support budget cannot resolve the underlying model-selection issue. Once the budget is overly enlarged, redundant support entries are retained, which increases the effective fitting dimension and introduces persistent false-alarm errors.
    \item Fourth, SBL achieves substantially better NMSE performance than the two fixed-budget baselines, especially at medium and high SNRs. This is expected because SBL does not rely on a manually prescribed support size and can adaptively suppress irrelevant coefficients through its hierarchical sparse prior. Nevertheless, SBL remains consistently worse than the proposed method and Oracle-LS. The performance gap indicates that automatic sparsity promotion alone is insufficient to fully exploit the adopted DS-LTV channel structure. In contrast, the proposed method explicitly searches over the structured DD support family and selects the realization-wise support dimensions through the BIC criterion, leading to more accurate support identification and channel reconstruction.
\end{enumerate}

Figure~\ref{fig:nmse_main} supports the central claim of the paper: under realization-wise support fluctuations, neither fixed mean-sized nor enlarged fixed budgets are adequate. SBL offers adaptivity but misses the structural prior, whereas the proposed method achieves near-oracle performance by explicitly leveraging the shared-Doppler geometry and data-driven order selection.

\subsection{Exact Support Recovery Probability}

\begin{figure}[t]
    \centering
    \subfloat[Proposed method: exact support recovery probability.]{
        \includegraphics[width=0.4\columnwidth]{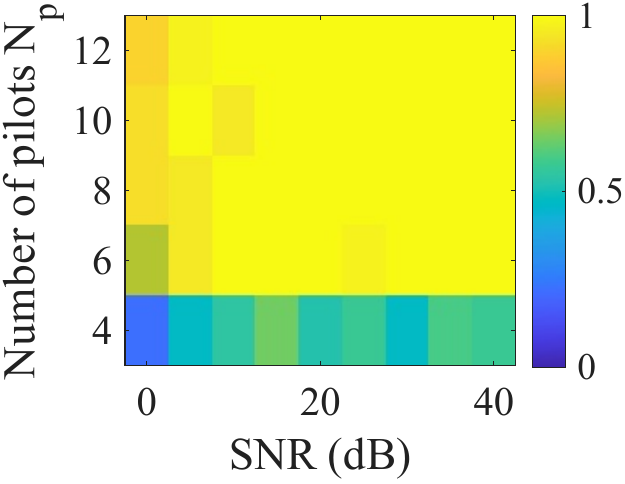}
        \label{fig:support_heatmaps_prop}
    }
    \hfill
    \subfloat[Shared-mean baseline: exact support recovery probability.]{
        \includegraphics[width=0.4\columnwidth]{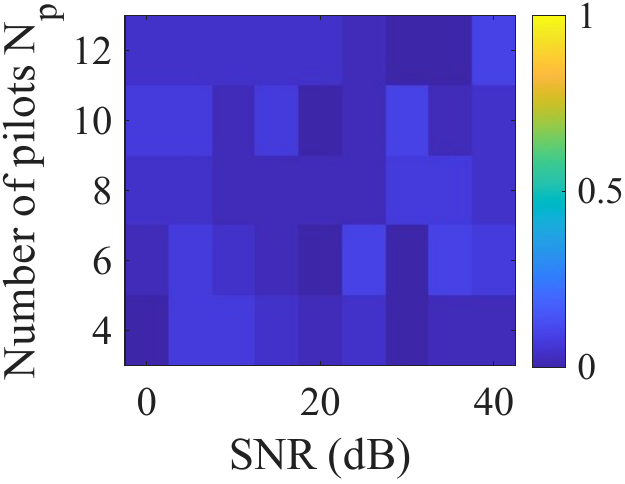}
        \label{fig:support_heatmaps_mean}
    }
    \caption{Exact support recovery probability over the $(N_p,\mathrm{SNR})$ plane.}
    \label{fig:support_heatmaps}
\end{figure}
To further understand the behavior observed in Figure~\ref{fig:nmse_main}, we next examine the exact support recovery probability. Figure~\ref{fig:support_heatmaps} presents two heatmaps over the two-dimensional grid formed by the SNR and the number of pilots $N_p$. Figure~\ref{fig:support_heatmaps}(a) corresponds to the proposed method, while Figure~\ref{fig:support_heatmaps}(b) corresponds to the shared-mean baseline.

From Figure~\ref{fig:support_heatmaps}(a), we observe that the proposed method exhibits a clear joint dependence on the pilot budget and the SNR. When the number of pilots is small, the support recovery probability is limited by the reduced number of observations, especially at low SNR. As $N_p$ increases, the support recovery probability rises rapidly, and once a moderate pilot budget is available, high support recovery probability is achieved in a broad SNR regime. This demonstrates that the proposed method effectively exploits the shared-Doppler structure and can reliably recover the exact support once the observation budget becomes sufficient.

In contrast, Figure~\ref{fig:support_heatmaps}(b) shows that the shared-mean baseline maintains a very low exact support recovery probability throughout almost the entire $(N_p,\mathrm{SNR})$ plane. Even when both SNR and pilot number increase, the recovery probability remains poor. This observation is particularly important because it indicates that the failure of the fixed shared-mean baseline is not primarily due to noise or observation deficiency. Instead, it is caused by the structural mismatch induced by the fixed support budget. In other words, increasing SNR or adding more pilots cannot fundamentally correct an incorrect model order assumption.


\subsection{Single-Realization Support Visualization}
\begin{figure}[t]
    \centering
    \subfloat[True.]{
        \includegraphics[width=0.22\columnwidth]{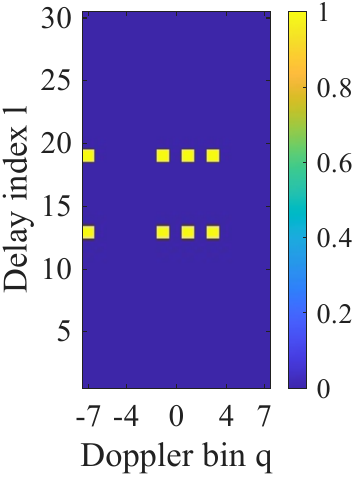}
        \label{fig:support_maps_true}
    }
    \hfill
    \subfloat[Proposed.]{
        \includegraphics[width=0.22\columnwidth]{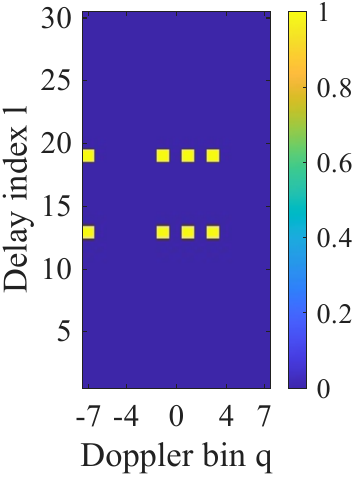}
        \label{fig:support_maps_prop}
    }
    \hfill
    \subfloat[Mean.]{
        \includegraphics[width=0.22\columnwidth]{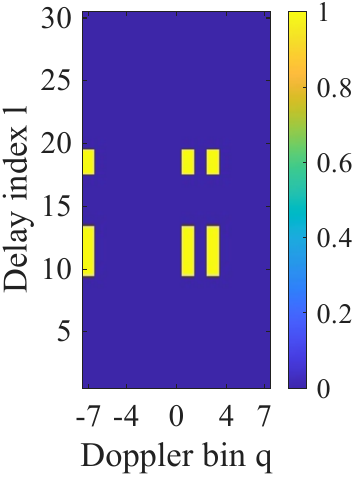}
        \label{fig:support_maps_mean}
    }
    \hfill
    \subfloat[Tolerant.]{
        \includegraphics[width=0.22\columnwidth]{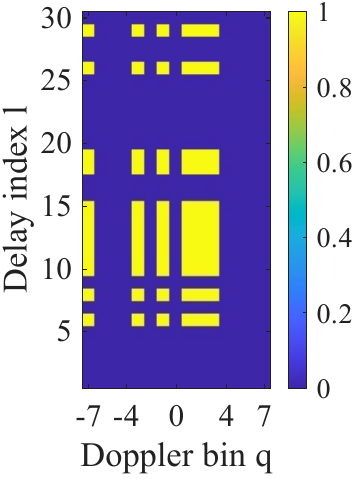}
        \label{fig:support_maps_tol}
    }
    \caption{Comparison of the true and recovered DD supports for one representative realization at \(\mathrm{SNR}=20\) dB.}
    \label{fig:support_maps}
\end{figure}
Figure~\ref{fig:support_maps} provides a qualitative visualization of the recovered supports for one representative realization at $\mathrm{SNR}=20$ dB. In this figure, the horizontal axis corresponds to the Doppler bin index $q\in\{-Q,\ldots,Q\}$, the vertical axis corresponds to the delay index $l\in\{0,\ldots,L-1\}$, and the highlighted entries indicate active coefficients.

The true support in Figure~\ref{fig:support_maps}(a) clearly exhibits the shared-Doppler structure: the active delay taps share a common Doppler support set. Figure~\ref{fig:support_maps}(b) shows that the proposed method reconstructs this structure almost exactly. Both the active delay set and the common Doppler support are correctly recovered, which is consistent with the near-oracle NMSE observed in Figure~\ref{fig:nmse_main}.

By contrast, Figure~\ref{fig:support_maps}(c) shows that the shared-mean baseline tends to underestimate the support. Although its assumed support family is consistent with the shared-Doppler geometry, the fixed mean-sized budget is too restrictive for many Bernoulli realizations, which leads to missing active coefficients. On the other hand, Figure~\ref{fig:support_maps}(d) shows that the shared-tolerant baseline overestimates the support by retaining an enlarged shared support set. This reduces the risk of misses, but introduces additional false alarms and an unnecessarily large fitting dimension.

Figure~\ref{fig:support_maps} visually confirms the interpretation of Figure~\ref{fig:nmse_main}: the shared-mean baseline suffers from underestimation, the shared-tolerant baseline suffers from overestimation, and the proposed method achieves accurate support identification by adaptively selecting the support dimensions from the data.


\section{Conclusion}
\label{sec:conclusion}

This paper developed a sparsity-agnostic DD-domain channel estimator that is practical for real-world systems, where sparsity varies unpredictably across time and propagation environments. The main takeaway is that for sparse channels with regular support, the right question is not ``how sparse?'' but ``what is the structure?'', and crucially, that structure can be learned from the data.
\bibliography{ref}

@article{bajwa2010compressed,
  title={Compressed channel sensing: A new approach to estimating sparse multipath channels},
  author={Bajwa, Waheed U and Haupt, Jarvis and Sayeed, Akbar M and Nowak, Robert},
  journal={Proceedings of the IEEE},
  volume={98},
  number={6},
  pages={1058--1076},
  year={2010},
}

@article{molisch2004geometry,
  title={Geometry-based directional model for mobile radio channels—Principles and implementation},
  author = {Molisch, Andreas F. and Kuchar, Alexander and Laurila, Juha and Hugl, Klaus and Schmalenberger, Ralph},
  journal = {European Transactions on Telecommunications},
  volume={14},
  number={4},
  pages={351--359},
  year={2003},
}

@article{shao2021federated,
  title={Federated edge learning with misaligned over-the-air computation},
  author={Shao, Yulin and G{\"u}nd{\"u}z, Deniz and Liew, Soung Chang},
  journal={IEEE Trans. Wire. Comm.},
  volume={21},
  number={6},
  pages={3951--3964},
  year={2021},
  publisher={IEEE}
}

@article{taubock2010compressive,
   author={Taubock, Georg and Hlawatsch, Franz and Eiwen, Daniel and Rauhut, Holger},
  journal={IEEE Journal of Selected Topics in Signal Processing}, 
  title={Compressive Estimation of Doubly Selective Channels in Multicarrier Systems: Leakage Effects and Sparsity-Enhancing Processing}, 
  year={2010},
  volume={4},
  number={2},
  pages={255-271},
 }

@article{berger2010application,
   title={Application of compressive sensing to sparse channel estimation},
  author={Berger, Christian R and Wang, Zhaohui and Huang, Jianzhong and Zhou, Shengli},
  journal={IEEE Communications Magazine},
  volume={48},
  number={11},
  pages={164--174},
  year={2010},
  publisher={IEEE}
}

@book{eldar2012compressed,
   title={Compressed sensing: theory and applications},
  author={Eldar, Yonina C and Kutyniok, Gitta},
  year={2012},
  publisher={Cambridge university press}
}

@article{bemani2023affine,
  title={Affine frequency division multiplexing for next generation wireless communications},
  author={Bemani, Ali and Ksairi, Nassar and Kountouris, Marios},
  journal={IEEE Trans. Wire. Comm.},
  volume={22},
  number={11},
  pages={8214--8229},
  year={2023},
  publisher={IEEE}
}

@article{zhang2025polarization,
  title={Polarization-Aware Movable Antenna},
  author={Zhang, Runxin and Shao, Yulin and Eldar, Yonina C},
  journal={IEEE Trans. Wire. Comm.},
  volume={25},
  pages={7428--7442},
  year={2025},
  publisher={IEEE}
}

@article{cao2025agile,
  title={Agile Affine Frequency Division Multiplexing},
  author={Cao, Yewen and Shao, Yulin},
  journal={arXiv preprint arXiv:2512.14424},
  year={2025}
}

@article{benzine2026models,
  title={Models, Methods, and Waveforms for Estimation and Prediction of Sparse Time--Varying Channels},
  author={Benzine, Wissal and Bemani, Ali and Ksairi, Nassar and Slock, Dirk},
  journal={IEEE Trans. Wire. Comm.},
  volume={25},
  pages={9623--9638},
  year={2025},
  publisher={IEEE}
}

@article{tropp2007signal,
  title={Signal recovery from random measurements via orthogonal matching pursuit},
  author={Tropp, Joel A and Gilbert, Anna C},
  journal={IEEE Trans. Info. Theory},
  volume={53},
  number={12},
  pages={4655--4666},
  year={2007},
  publisher={IEEE}
}

@article{duarte2011structured,
    author={Duarte, Marco F. and Eldar, Yonina C.},
  journal={IEEE Transactions on Signal Processing}, 
  title={Structured Compressed Sensing: From Theory to Applications}, 
  year={2011},
  volume={59},
  number={9},
  pages={4053-4085}
}

@article{schwarz1978estimating,
   ISSN = {00905364, 21688966},
 author = {Gideon Schwarz},
 journal = {The Annals of Statistics},
 number = {2},
 pages = {461--464},
 publisher = {Institute of Mathematical Statistics},
 title = {Estimating the Dimension of a Model},
 urldate = {2026-04-26},
 volume = {6},
 year = {1978}
 }

@techreport{3gpp,
  author      = "{3GPP}",
  title       = "{Study on Channel Model for Frequencies from 0.5 to 100 GHz}",
  institution = "{3GPP}",
  type        = "{Technical Report}",
  number      = "{TR 38.901}",
  month       = apr,
  year        = {2026}
}

@article{Hoerl,
author = {Arthur E. Hoerl and Robert W. Kennard},
title = {Ridge Regression: Biased Estimation for Nonorthogonal Problems},
journal = {Technometrics},
year = {1970},
publisher = {Taylor \& Francis}
}

@article{cao2026stepped,
  title={Stepped Frequency Division Multiplexing: A Jump-Free Continuous-Time {AFDM} Waveform},
  author={Cao, Yewen and Shao, Yulin},
  journal={arXiv:2605.11657},
  year={2026}
}

@article{shao2026embodied,
  title={Embodied Communication: Sensing-Induced Reliability Fields and Capacity Bounds},
  author={Shao, Yulin},
  journal={arXiv:2605.08284},
  year={2026}
}
\end{document}